\newtheorem{theorem}{Theorem}[section]
\newtheorem{corollary}[theorem]{Corollary}
\newtheorem{proposition}[theorem]{Proposition}
\newtheorem{definition}[theorem]{Definition}
\newenvironment{proof}[1][Proof]{\begin{trivlist}
\item[\hskip \labelsep {\bfseries #1}]}{\end{trivlist}}
\newcommand{\qed}{\nobreak \ifvmode \relax \else
	\ifdim\lastskip<1.5em \hskip- \lastskip
	\hskip1.5em plus0em minus0.5em \fi \nobreak
	\vrule height0.75em width0.5em depth0.25em\fi}
\begin{document}

\title{Geometric properties of a certain class of compact dynamical horizons in locally rotationally symmetric class II spacetimes}

\author{Abbas \surname{Sherif}}
\email{abbasmsherif25@gmail.com}
\affiliation{Cosmology and Gravity Group, Department of Mathematics and Applied Mathematics, University of Cape Town, Rondebosch 7701, South Africa}

\author{Peter K. S. \surname{Dunsby}}
\email{peter.dunsby@uct.ac.za}
\affiliation{Cosmology and Gravity Group, Department of Mathematics and Applied Mathematics, University of Cape Town, Rondebosch 7701, South Africa\\South African Astronomical Observatory, Observatory 7925, Cape Town, South Africa}

\begin{abstract}
In this paper we study the geometry of a certain class of compact dynamical horizons with a time-dependent induced metric in locally rotationally symmetric class II spacetimes. We first obtain a compactness condition for embedded \(3\)-manifolds in these spacetimes, satisfying the weak energy condition, with non-negative isotropic pressure \(p\). General conditions for a \(3\)-manifold to be a dynamical horizon are imposed, as well as certain \textit{genericity} conditions, which in the case of locally rotationally symmetric class II spacetimes reduces to the statement that `the weak energy condition is strictly satisfied or otherwise violated'. The compactness condition is presented as a spatial first order partial differential equation in the sheet expansion \(\phi\), in the form \(\hat{\phi}+(3/4)\phi^2-cK=0\), where \(K\) is the Gaussian curvature of \(2\)-surfaces in the spacetime and \(c\) is a real number parametrizing the differential equation, where \(c\) can take on only two values, \(0\) and \(2\). Using geometric arguments, it is shown that the case \(c=2\) can be ruled out, and the \(\mathbb{S}^3\) (\(3\)-dimensional sphere) geometry of compact dynamical horizons for the case \(c=0\) is established. Finally, an invariant characterization of this class of compact dynamical horizons is also presented. 
\end{abstract}

\maketitle

\section{Introduction}

A new covariant and gauge invariant way of studying black hole horizons \cite{rit1,shef1,shef2} has emerged over the last few years. Though used in a limited way to date, it provides a computationally inexpensive method for determining various geometric and thermodynamic properties of black hole horizons \cite{shef1,shef2}. This method employs the \(1+1+2\) semitetrad covariant splitting of spacetimes, which describes the spacetimes using well defined geometric and matter variables (see the references \cite{cc1,crb1,gbc1}). The first use of this approach in the study of black hole horizons, as far as we are aware, was carried out in 2014 by Ellis \textit{et al.}, \cite{rit1}, where the authors considered a gravitational collapse scenario in a realistic astrophysical setting, considering examples from the relatively small class of locally rotationally symmetric class II spacetimes. 

Ellis \textit{et al.} \cite{rit1} considered a case in an astrophysical setting where an initial marginally trapped surface, at the beginning of a gravitational collapse scenario, bifurcates into evolving surfaces, one being a timelike marginally trapped tube (to later be defined) that evolves inward and the other being a spacelike marginally trapped tube which evolves outward, expanding under the infalling of radiation, and approaches asymptotically a null marginally trapped tube. The causal character of the marginally trapped tubes (see the references \cite{ash1,ash2,ash3,ak1,ib1,boo2,ibb1,ib3} for more discussions on marginally trapped tubes) was determined by the slope of the tangent to the marginally trapped surfaces foliating them. Noting that  the covariant derivative of the outgoing null expansion scalar is normal to the marginally trapped tubes and marginally trapped surfaces, the dot product with the tangent to the marginally trapped surfaces vanishes, which allows one to determine the expression for the slope. In more general spacetimes this approach fails, or at best an explicit expression for the slope is not possible. This was the primary nature of two works by Sherif \textit{et al.} \cite{shef1,shef2} which were extensions of the work by Ellis \textit{et al.} \cite{rit1}, where two approaches were established; in one of the approaches the norm of the covariant derivative of the outgoing null expansion scalar was used, and in the other approach a smooth function on the marginally trapped tube - an approach by Booth and coauthors \cite{ibb1,ib3}, albeit restricted to spherically symmetric spacetimes - was expressed in terms of the \(1+1+2\) covariant variables. These approaches have allowed Sherif and coauthors to obtain well established results including the stability and instability of marginally trapped surfaces (see the following references \cite{and1,and2,yau1,jang1} for a discussion on this subject) of a Schwarzschild black hole and the Oppenheimer-Snyder dust collapse, as well as the bounds on the equation of state parameter determining the causal character of horizons in the Robertson-Walker spacetimes (these results were obtained by Ben-Dov \cite{bend1}). The third law of black hole thermodynamics was shown for locally rotationally symmetric class II spacetimes and a classification scheme was provided for diffeomorphically equivalent and causally equivalent marginally trapped tubes.

Thurston's \textit{geometrization conjecture} \cite{thu1,thu2,thu3} and Hamilton's proof of the \textit{uniformization theorem} \cite{ham1,ham2,ham3} are two of the most fundamental and important results in geometry and geometric analysis. These results provide the standard classification of \(2\)-dimensional and \(3\)-dimensional smooth manifolds. The \textit{Ricci flow} developed by Hamilton \cite{ham3} to study the evolution of metrics on smooth manifolds, provides a way of classifying the geometry of smooth closed \(3\)-manifolds. In particular, Hamilton showed that a closed smooth manifold admitting a metric of positive curvature is uniquely spherical. As a consequence, a compact Riemannian \(3\)-manifold admitting a positive metric cannot be foliated by hyperbolic planes. This result will be used in Section \ref{soc2} when we restrict the allowable geometry under the compactness conditions that will be imposed on horizons in locally rotationally symmetric class II spacetimes. The compactness condition can be obtained using the \textit{Bonnet-Myers Theorem} \cite{mye1} which gives a bound on the radius of a manifold with Ricci curvature strictly positive.

From a purely geometric viewpoint, finding specific examples of compact \(3\)-manifolds in spacetime to investigate both local and global geometric properties makes sense as there is a wealth of literature on \(n\)-dimensional compact Riemannian manifolds. The effectiveness of the adapting of horizon analysis to the \(1+1+2\) covariant variables - as demonstrated in the works by Sherif \textit{et al.} \cite{rit1,shef1,shef2} - coupled with standard results for compact \(3\)-manifolds should prove very useful in exposing the intricate balance between geometry and thermodynamics on black hole horizons. This work aims to identify certain classes of compact dynamical horizons in spacetimes of the Locally rotationally symmetric class (LRS II) using existing geometric analytic tools from Riemannian geometry, and investigate their geometric and thermodynamic properties. 

The paper is organized as follows: in section \ref{soc1} we briefly discuss the semi-tetrad covariant approach to be followed throughout this work, and then proceed to provide definitions needed to clarify the discourse of the paper. Black hole horizons, the required energy conditions, and additional properties are introduced in a covariant way. In section \ref{soc2} we provide a compactness theorem for dynamical horizons in LRS II spacetimes, evoking the well known Bonnet-Myers theorem. The properties of, and interplay between the geometry and thermodynamics of these obtained classes of compact horizons are then investigated using the Ricci flow evolution equation. The existence of solutions to the evolution equation and the geometric restrictions are also investigated. Finally, we conclude with a discussions of our results in section \ref{soc5}. 

\section{Preliminaries}\label{soc1}

In this section we provide a review of some background material on the \(1+1+2\) covariant splitting of LRS II spacetimes, as well as useful definitions so that the reader keeps track of concepts that will be used throughout the rest of the paper.

\subsection{\(1+1+2\) covariant splitting of LRS II spacetimes}

Any \(4\)-vector \(U^{\mu}\) in a spacetime manifold may be split into a component along a unit timelike vector field \(u^{\mu}\) and a component on the \(3\)-space as

\begin{eqnarray*}
U^{\mu}&=&Uu^{\mu} + U^{\langle \mu \rangle }.
\end{eqnarray*}
The scalar \(U\) is the scalar along \(u^{\mu}\) and \(U^{\langle \mu \rangle }\) is the projected \(3\)-vector \cite{ggff1,sge1} projected via the tensor \(h_{\mu}^{\ \nu}\equiv g_{\mu}^{\ \nu}+u_{\mu}u^{\nu}\). This \(1+3\) splitting irreducibly splits the covariant derivative of \(u^{\mu}\) as
\begin{eqnarray}\label{mmmn}
\nabla_{\mu}u_{\nu}=-A_{\mu}u_{\nu}+\frac{1}{3}h_{\mu\nu}\Theta+\sigma_{\mu\nu},
\end{eqnarray}
and the energy momentum tensor to be decomposed as
\begin{eqnarray}
T_{\mu\nu}=\rho u_{\mu}u_{\nu} + 2q_{(\mu}u_{\nu)} +ph_{\mu\nu} + \pi_{\mu\nu}.
\end{eqnarray}
The vector \(A_{\mu}=\dot{u}_{\mu}\) is the acceleration vector, \(\Theta\equiv D_{\mu}u^{\mu}\) - the trace of the fully orthogonally projected covariant derivative of \(u^{\mu}\) - is the expansion and \(\sigma_{\mu\nu}=D_{\langle \nu}u_{{\mu}\rangle}\) is the shear tensor. (Wherever used in this paper, angle brackets will denote the projected symmetric trace-free part of the tensor.) The quantity \(\rho\equiv T_{\mu\nu}u^{\mu}u^{\nu}\) is the energy density, \(q_{\mu}=-h_{\mu}^{\ \nu}T_{\nu\gamma}u^{\gamma}\) is the \(3\)-vector defining the heat flux, \(p\equiv\left(1/3\right)h^{\mu\nu}T_{\mu\nu}\) is the isotropic pressure and \(\pi_{\mu\nu}\) is the anisotropic stress tensor.

Whenever there is a preferred unit normal spatial direction \(e^{\mu}\) one may split the \(3\)-space into a direction along \(e^{\mu}\) and a \(2\)-surface where the projection tensor defined as

\begin{eqnarray}
N_{\mu\nu}=g_{\mu\nu}+u_{\mu}u_{\nu}-e_{\mu}e_{\nu}.
\end{eqnarray} 
The projection tensor \(N_{\mu\nu}\) projects any \(2\)-vector orthogonal to \(u^{\mu}\) and \(e^{\mu}\) onto the \(2\)-surface defined by the sheet (\(N^{\ \mu}_{\mu}=2\)). Thus \(u^{\mu}N_{\mu\nu}=0, \ e^{\mu}N_{\mu\nu}=0\). The vectors \(u^{\mu}\) and \(e^{\mu}\) are normalized so that \(u^{\mu}u_{\mu}=-1\) and \(e^{\mu}e_{\mu}=1\). This is referred to as the \(1+1+2\) splitting.

This splitting of the spacetime gives rise to four derivatives:
\begin{itemize}
\item For an arbitrary tensor \(S^{\mu..\nu}_{\ \ \ \ \gamma..\delta}\), one defines the \textit{covariant time derivative} (or simply the dot derivative)  along the observers' congruence of \(S^{\mu..\nu}_{\ \ \ \ \gamma..\delta}\) as \(\dot{S}^{\mu..\nu}_{\ \ \ \ \gamma..\delta}\equiv u^{\sigma}\nabla_{\sigma}S^{\mu..\nu}_{\ \ \ \ \gamma..\delta}\).

\item For an arbitrary tensor \(S^{\mu..\nu}_{\ \ \ \ \gamma..\delta}\) one defines the fully orthogonally \textit{projected covariant derivative} \(D\) with the tensor \(h_{\mu\nu}\) as \(D_\sigma S^{\mu..\nu}_{\ \ \ \ \gamma..\delta}\equiv h^{\mu}_{\ \rho}h^{\eta}_{\ \gamma}...h^{\nu}_{\ \tau}h^{\iota}_{\ \delta}h^{\lambda}_{\ \sigma}\nabla_{\lambda}S^{\rho..\tau}_{\ \ \ \ \eta..\iota}\).

\item Given a \(3\)-tensor \(\psi^{\mu..\nu}_{\ \ \ \ \gamma..\delta}\) the spatial derivative along the vector field \(e^{\mu}\) (simply called the \textit{hat derivative}) is given by \(\hat{\psi}_{\mu..\nu}^{\ \ \ \ \gamma..\delta}\equiv e^{\sigma}D_{\sigma}\psi_{\mu..\nu}^{\ \ \ \ \gamma..\delta}\).

\item Given a \(3\)-tensor \(\psi^{\mu..\nu}_{\ \ \ \ \gamma..\delta}\) the projected spatial derivative on the \(2\)-sheet (projection by the tensor \(N_{\mu}^{\ \nu}\)), called the \textit{delta derivative}, is given by \(\delta_\sigma\psi_{\mu..\nu}^{\ \ \ \ \gamma..\delta}\equiv N_{\mu}^{\ \rho}..N_{\nu}^{\ \tau}N_{\eta}^{\ \gamma}..N_\iota^{\ \delta}N_{\sigma}^{\ \lambda}D_{\lambda}\psi_{\rho..\tau}^{\ \ \ \ \eta..\iota}\).
\end{itemize}
Note that the projections by the tensors \(h^{\mu\nu}\) and \(N^{\mu\nu}\) in the definitions of the \(D\) and \(\delta\) derivatives are carried out over all indices (see the references \cite{cc1,pg1,ggff2} for more discussions).

\begin{definition}
A \textbf{locally rotationally symmetric class} II (LRS II) spacetime is an evolving, vorticity free and spatial twist free spacetime with a one dimensional isotropy group of spatial rotations defined at each point of the spacetime. It is given by the general line element 
\begin{eqnarray}\label{fork}
\begin{split}
ds^2&=-A^2\left(t,\chi\right)+ B^2\left(t,\chi\right)+ F^2\left(t,\chi\right) \left(dy^2+G^2\left(y,k\right)dz^2\right),
\end{split}
\end{eqnarray}
where \(t,\chi\) are parameters along integral curves of the timelike vector field \(u^{\mu}=A^{-1}\delta^{\mu}_0\) of a timelike congruence and the preferred spacelike vector \(e^{\mu}=B^{-1}\delta_{\nu}^{\mu}\) respectively. The constant \(k\) fixes the function \(G\left(y,k\right)\) (\(k=-1\) corresponds to \(\sinh y\), \(k=0\) corresponds to \(y\), \(k=1\) corresponds to \(\sin y\)) \cite{cc1,ggff1,ggff2}. 
\end{definition}

For LRS II spacetimes, all vector and tensor quantities vanish identically and the Weyl tensor is purely electric (see reference \cite{cc1} for details). Therefore the complete set of \(1+1+2\) covariant scalars fully describing the LRS class of spacetimes are 
\begin{eqnarray*}
\lbrace{A,\Theta,\phi, \Sigma, \mathcal{E}, \rho, p, \Pi, Q\rbrace}. 
\end{eqnarray*}
The quantity \(\phi\equiv\delta_{\mu}e^{\mu}\) is the sheet expansion, \(\Sigma\equiv\sigma_{\mu\nu}e^{\mu}e^{\nu}\) is the scalar associated with the shear tensor \(\sigma_{\mu\nu}\), \(\mathcal{E}\equiv E_{\mu\nu}e^{\mu}e^{\nu}\) is the scalar associated with the electric part of the Weyl tensor \(E_{\mu\nu}\), \(\Pi\equiv\pi_{\mu\nu}e^{\mu}e^{\nu}\) is the anisotropic stress scalar, and \(Q\equiv -e^{\mu}T_{\mu\nu}u^{\nu}=q_{\mu}e^{\mu}\) is the scalar associated to the heat flux vector \(q_{\mu}\).

The full covariant derivatives of the vector fields \(u^{\mu}\) and \(e^{\nu}\) are given by \cite{cc1}
\begin{subequations}\label{4}
\begin{align}
\nabla_{\mu}u_{\nu}&=-Au_{\mu}e_{\nu} + e_{\mu}e_{\nu}\left(\frac{1}{3}\Theta + \Sigma\right)+\frac{1}{2} N_{\mu\nu}\left(\frac{2}{3}\Theta -\Sigma\right),\label{4}\\
\nabla_{\mu}e_{\nu}&=-Au_{\mu}u_{\nu} + \left(\frac{1}{3}\Theta + \Sigma\right)e_{\mu}u_{\nu} +\frac{1}{2}\phi N_{\mu\nu}.\label{444}
\end{align}
\end{subequations}
We also note the useful expression 
\begin{eqnarray}\label{redpen}
\hat{u}^{\mu}&=&\left(\frac{1}{3}\Theta+\Sigma\right)e^{\mu}.
\end{eqnarray}

We will make use of the following commutation relation between the dot and hat derivatives when acting on an arbitrary scalar \(\psi\) in LRS II spacetimes:

\begin{eqnarray}\label{ghh1}
\hat{\dot{\psi}}-\hat{\dot{\psi}}=-A\dot{\psi}+\left(\frac{1}{3}\Theta+\Sigma\right)\hat{\psi}.
\end{eqnarray}
The field equations for LRS spacetimes are given as propagation and evolution of the covariant scalars \cite{cc1,rit1}:

\begin{itemize}

\item \textit{Evolution}
\begin{subequations}
\begin{align}
\frac{2}{3}\dot{\Theta}-\dot{\Sigma}&=A\phi- \frac{1}{2}\left(\frac{2}{3}\Theta-\Sigma\right)^2 + \mathcal{E} - \frac{1}{2}\Pi- \frac{1}{3}\left(\rho+3p\right),\label{subbe1}\\
\dot{\phi}&=\left(\frac{2}{3}\Theta-\Sigma\right)\left(A-\frac{1}{2}\phi\right) + Q,\label{subbe2}\\
\dot{\mathcal{E}}-\frac{1}{3}\dot{\rho}+\frac{1}{2}\dot{\Pi}&=-\left(\frac{2}{3}\Theta-\Sigma\right)\left(\frac{3}{2}\mathcal{E}+\frac{1}{4}\Pi\right)+\frac{1}{2}\phi Q+\frac{1}{2}\left(\frac{2}{3}\Theta-\Sigma\right)\left(\rho+p\right),\label{subbe3}
\end{align}
\end{subequations}
\item \textit{Propagation}
\begin{subequations}
\begin{align}
\frac{2}{3}\hat{\Theta}-\hat{\Sigma}&=\frac{3}{2}\phi\Sigma + Q,\label{subbe4}\\
\hat{\phi}&=-\frac{1}{2}\phi^2 + \left(\frac{1}{3}\Theta+\Sigma\right)\left(\frac{2}{3}\Theta-\Sigma\right)-\frac{2}{3}\rho-\mathcal{E}-\frac{1}{2}\Pi,\label{subbe5}\\
\hat{\mathcal{E}}-\frac{1}{3}\hat{\rho}+\frac{1}{2}\hat{\Pi}&=-\frac{3}{2}\phi\left(\mathcal{E}+\frac{1}{2}\Pi\right)-\frac{1}{2}\left(\frac{2}{3}\Theta-\Sigma\right)Q\label{subbe6}
\end{align}
\end{subequations}
\item \textit{Evolution/Propagation}
\begin{subequations}
\begin{align}
\hat{A}-\dot{\Theta}&=-\left(A+\phi\right)A-\frac{1}{3}\Theta^2+\frac{3}{2}\Sigma^2+\frac{1}{2}\left(\rho+3p\right),\label{subbe7}\\
\dot{\rho}+\hat{Q}&=-\Theta\left(\rho+p\right)-\left(2A+\phi\right)Q-\frac{3}{2}\Sigma\Pi,\label{subbe8}\\
\dot{Q}+\hat{p}+\hat{\Pi}&=-\left(A+\frac{3}{2}\phi\right)\Pi-\left(\frac{4}{3}\Theta+\Sigma\right)Q-\left(\rho+p\right)A.\label{subbe9}
\end{align}
\end{subequations}

\end{itemize}

\subsection{Some useful definitions}

We will now give some definitions used in describing black hole spacetimes and associated horizons. What are to follow are all very familiar definitions and we will follow mostly standard references \cite{ash1,ash2,ash3,boo2,ibb1}.

Given an embedded \(2\)-manifold \(S\subset M\), one may define two normal vector fields \(k^{\mu}\) and \(l^{\mu}\), called the outgoing and ingoing null normal vector fields associated with outgoing and ingoing null geodesics. The vector fields \(k^{\mu}\) and \(l^{\mu}\) are normalized to satisfy the relations

\begin{eqnarray*}
k^{\mu}k_{\mu}=l^{\mu}l_{\mu}=0;\ \ \ k^{\mu}l_{\mu}=-1.
\end{eqnarray*} 
Associated with \(S\) are functions, defined for each of the null normal directions which we denote \(\Theta_k\) and \(\Theta_l\) respectively. These are called the null normal expansions in the \(k^{\mu}\) and \(l^{\mu}\) directions.

\begin{definition}[\textbf{Marginally trapped surface (MTS)}]\label{def5}
An embedded \(2\)-surface \(S\) in \(M\) is said to be marginally trapped if for all points of \(S\), \(\Theta_k=0\) and \(\Theta_l\in\mathbb{R}^-\) (where \(\mathbb{R}^-\) denotes the set of negative real numbers).
\end{definition}
These \(2\)-surfaces foliate hypersurfaces in spacetime that, under certain conditions, may be associated to the boundary of a black hole. The notion of these hypersurfaces which generalizes Hayward's \textit{future outer trapping horizon} (FOTH) was introduced by Ashtekar and Galloway \cite{ash3}: 

\begin{definition}[\textbf{Marginally trapped tube (MTT)}]\label{def6}
A marginally trapped tube is a codimension \(1\) embedded submanifold in a spacetime foliated by marginally trapped surfaces.
\end{definition}

In general, the sign of the induced metric on a marginally trapped tube may vary. In specific cases where the sign of the metric is not changing as one moves along the marginally trapped tube, the marginally trapped tube is called a timelike membrane (TLM), a non-expanding horizon (NEH), or a dynamical horizon depending on the sign of the metric which depends of the formalism used (we will elaborate on this shortly).

\begin{figure}
\def\svgwidth{12cm}
\def\svgheight{7cm}
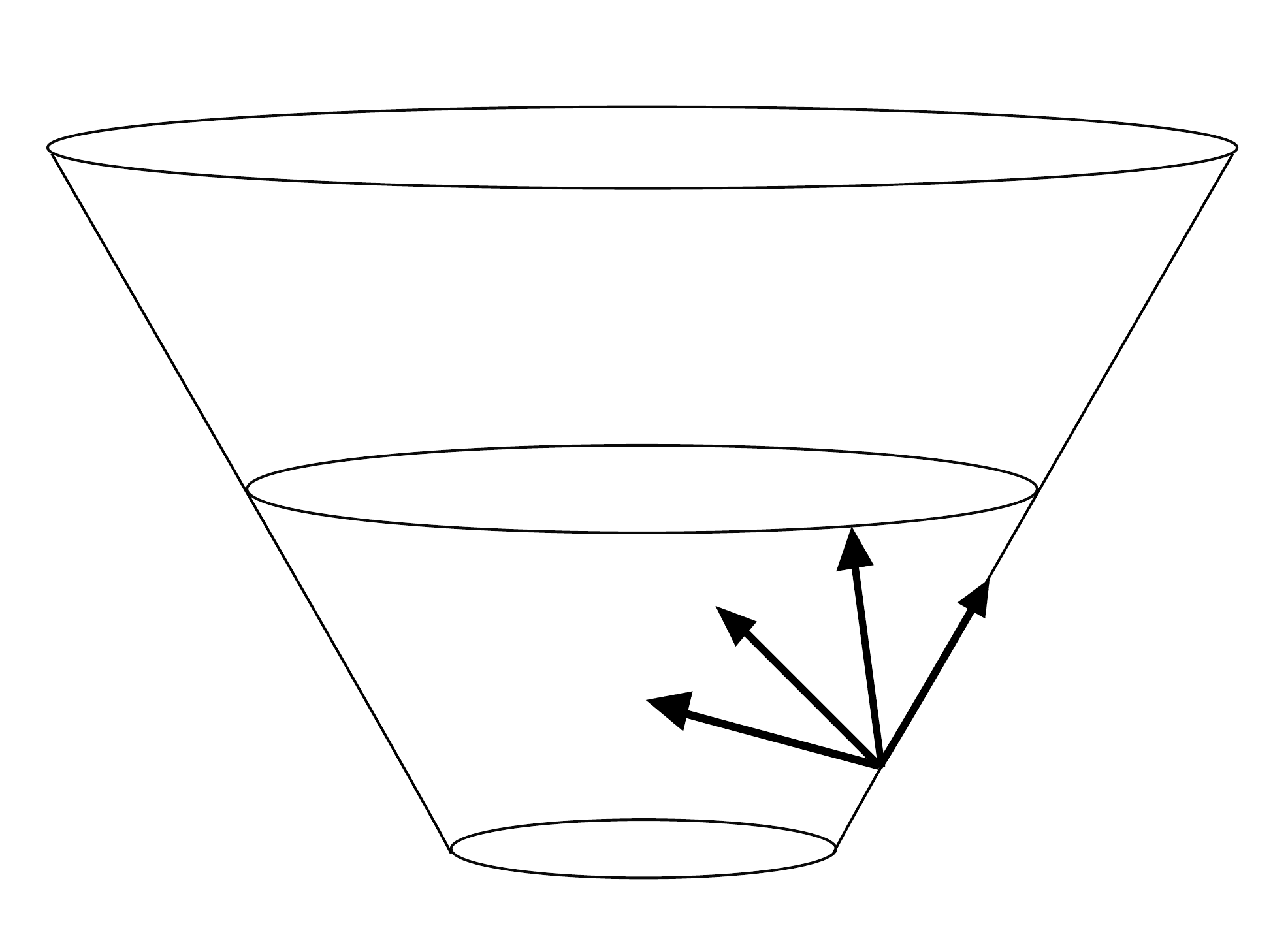
\caption{The usual depiction of a dynamical horizon \(\Xi\) foliated by marginally trapped surfaces S, showing the null normal vector fields \(k^a\) and \(l^a\), as well as the unit timelike and unit normal vector fields \(u^a\) and \(e^a\) respectively.}
\label{fig:a}
\end{figure}
Above we provide a depiction of a dynamical horizon in figure \ref{fig:a} (original depictions appearing in  \cite{ash1,ash2,ak1}), where it is pictured as a hyperboloid in Minkowski space. In this schematic, motions along the unit normal vector field \(e^a\) are interpreted as time evolution with respect to observers at infinity. Similar picture of a timelike membrane as a hyperboloid in Minkowski space can be presented, where in this case, there is a decrease in the surface area of the ``discs" depicting the marginally trapped surfaces \(S_i\) in figure \ref{fig:a}, as one moves along \(e^a\). The case of an non-expanding horizon can be depicted as a cylinder in Minkowski space. In each case the marginally trapped surfaces are intersections of the hyperboloid (cylinder) with spacelike planes.

\subsection{Marginal trapping and marginally trapped tubes in LRS II spacetimes}

For LRS II spacetimes the null expansion scalars associated with the outgoing and ingoing null normal vector fields to \(2\)-surfaces in a spacetime are given by linear combinations of the shear, expansion and sheet expansion covariant scalars. Explicitly these scalars are given by \cite{rit1,shef1,shef2}

\begin{subequations}
\begin{align}
\Theta_k&=\frac{1}{\sqrt{2}}\left(\frac{2}{3}\Theta-\Sigma+\phi\right),\label{subb1}\\
\Theta_l&=\frac{1}{\sqrt{2}}\left(\frac{2}{3}\Theta-\Sigma-\phi\right).\label{subb2}
\end{align}
\end{subequations}
The requirement that a marginally trapped surface satisfies \(\Theta_k=0\) and \(\Theta_l<0\) implies that we must have \(\phi>0\), which implies that \(\left(2/3\right)\Theta-\Sigma<0\). This then restricts the subclass of LRS II spacetimes potentially admitting marginally trapped surfaces. From \eqref{subb1} and \eqref{subb2}, it is clear that whenever \(\phi=0\) we must have \(\left(2/3\right)\Theta-\Sigma=0\) in which case the \(2\)-surface is minimal (see references \cite{shef1,shef2} for further discussion). Thus we have ruled out minimal surfaces in LRS II spacetimes for the rest of this paper.

As was discussed in the previous subsection, marginally trapped tubes are foliated by marginally trapped surfaces. In the case the metric signature on a marginally trapped tube is fixed, then it may be classed as timelike, non-expanding or spacelike. 

Determining the metric signature on an marginally trapped tube is not unique. However, one can always relate the different formulations that compute the metric signature. Two explicit formulations have been utilized by applying the \(1+1+2\) covariant formalism. The first is a formalism developed in reference \cite{shef1}. Briefly put, a choice of vector field, dependent on some smooth function - this function being denoted \(C\) - is made, which is tangent to the marginally trapped tube and everywhere orthogonal to the foliation (this approach was developed by Booth and coauthors (see reference \cite{ib3} and associated references) and utilized for the well known spherically symmetric spacetimes, but was generalized to all of a more diversed class of spacetimes and interpreted in terms of the \(1+1+2\) covariant variables):

\begin{eqnarray}\label{gghh1}
\mathcal{V}^{\mu}=k^{\mu}-Cl^{\mu}.
\end{eqnarray}
The definition of \(\mathcal{V}^{\mu}\) implies the outgoing null expansion scalar remains fixed as it is Lie dragged along \(\mathcal{V}^{\mu}\), which immediately gives

\begin{eqnarray}\label{gghh2}
C=\frac{\mathcal{L}_k\Theta_k}{\mathcal{L}_l\Theta_k},
\end{eqnarray}
where \(\mathcal{L}_n\) denotes the Lie derivative along the vector field \(n^{\mu}\). If \(C<0\), \(C=0\) or \(C>0\), and \(C\) is such that it is fixed all over the marginally trapped tube, then the marginally trapped tube is a timelike membrane, a non-expanding horizon or a dynamical horizon. A second approach (which we will not take into account in this paper but which may however be related to \(C\)) notes that the gradient of \(\Theta_k\), \(\nabla_{\mu}\Theta_k\), is normal to the marginally trapped tube and as such the sign of the norm of \(\nabla_{\mu}\Theta_k\) can be used to determine the causal character of the marginally trapped tube. In fact it can easily be seen that we can write (the references \cite{shef1,shef2} have the description of the procedures)

\begin{eqnarray}\label{gghh3}
C^*=\nabla_{\mu}\Theta_k\nabla^{\mu}\Theta_k=-\mathcal{L}_k\Theta_k\mathcal{L}_l\Theta_k,
\end{eqnarray}
which allows us to write

\begin{eqnarray}\label{gghh4}
C=-\frac{1}{\left(\mathcal{L}_l\Theta_k\right)^2}C^*.
\end{eqnarray}
Of course then the signs of \(C\) and \(C^*\) are reversed. In this case if \(C^*>0\) the marginally trapped tube is timelike and if \(C^*<0\) then the marginally trapped tube is spacelike. The requirement that the null energy condition (NEC) be satisfied on a marginally trapped tube (which we will assume throughout the rest of the paper) is equivalent to the condition that \(\mathcal{L}_k\Theta_k\leq 0\). 

This paper focuses on dynamical horizons exclusively and as such we will specify basic conditions that will necessarily satisfied. For LRS II spacetimes \(C\) is explicitly calculated as \cite{shef1}

\begin{eqnarray}\label{gghh5}
C=\frac{-\left(\rho+p+\Pi\right)+2Q}{\frac{1}{3}\left(\rho-3p\right)+2\mathcal{E}}.
\end{eqnarray}
Clearly if \(\mathcal{L}_k\Theta_k= 0\) then \(C=0\), a n0n-expanding horizon. Therefore, if we are assuming the null energy condition is satisfied then this amounts to the energy condition \(\rho+p+\Pi>2Q\). As \(C\) must be greater than zero on the dynamical horizon we must therefore have \(\rho<p-6\mathcal{E}\). Thus we obtain the required energy condition on a dynamical horizon in LRS II spacetimes: \(2Q-\Pi<\left(\rho+p\right)<2\left(p-3\mathcal{E}\right)\) (see the reference \cite{shef1} for further details).

There are certain cases that may be immediately ruled out, i.e. certain subclass of LRS II spacetimes can be determined to not admit a dynamical horizon. One of them is the shear-free case. As has been shown by Sherif \textit{et al.} \cite{shef2}, for a dynamical horizon the expansion \(\Theta\) is strictly positive. It is clear then that the shear-free case can admit no dynamical horizon. This is because in the shear-free case one has from the vanishing of \eqref{subb1} \(\left(2/3\right)\Theta=-\phi\), and since \(\phi>0\) we must have \(\Theta<0\). This in fact clearly shows that any marginally trapped tube in a shear-free LRS II spacetime will necessarily be a timelike membrane \cite{shef2}.

In this work we shall specialize to dynamical horizons. Throughout this paper we shall simply write \textit{horizon} whenever we are referring to a dynamical horizon. We shall also assume the \textit{genericity condition} of Ashtekar \& Galloway \cite{ash3}, i.e. 

\begin{eqnarray}\label{genco}
\sigma_{\mu\nu}\sigma^{\mu\nu}+T_{\mu\nu}k^{\mu}k^{\nu}\neq 0,
\end{eqnarray} 
holds true on the horizon, which, in the case of LRS II spacetimes, translates to the condition that the weak energy condition (WEC) is either strictly satisfied or otherwise violated on the horizon (\(\rho+p>0\) or \(\rho+p<0\)).

\section{Compact dynamical horizons in LRS II spacetimes}\label{soc2}

For LRS II spacetimes, with both the unit vectors \(u^{\mu}\) and \(e^{\mu}\) being hypersurface orthogonal, the Ricci tensor for any spacelike \(3\)-surface is given by \cite{cc1}

\begin{eqnarray}\label{go1}
\begin{split}
R_{\mu\nu}&=-\left(\hat{\phi}+\frac{1}{2}\phi^2\right)e_{\mu}e_{\nu}-\left[\frac{1}{2}\left(\hat{\phi}+\phi^2\right)-K\right]N_{\mu\nu},
\end{split}
\end{eqnarray}
where \(K\) is the Gaussian curvature of the \(2\)-sheet which is given by \cite{cc1} 

\begin{eqnarray}\label{go2}
K=\frac{1}{3}\rho-\mathcal{E}-\frac{1}{2}\Pi+\frac{1}{4}\phi^2-\frac{1}{4}\left(\frac{2}{3}\Theta-\Sigma\right)^2,
\end{eqnarray}
whose dot and hat derivatives are respectively given by

\begin{subequations}
\begin{align}
\dot{K}&=-\left(\frac{2}{3}\Theta-\Sigma\right)K\label{gau1}\\
\hat{K}&=\phi K.\label{gau2}
\end{align}
\end{subequations}
The Ricci scalar on the \(3\)-manifold is given by

\begin{eqnarray}\label{go79}
R=-2\left(\hat{\phi}+\frac{3}{4}\phi^2-K\right).
\end{eqnarray}
In all that is to follow we will usually set 

\begin{eqnarray*}
\alpha=-\left(\hat{\phi}+\frac{1}{2}\phi^2\right)\ \ \text{and}\ \ \beta=-\left[\frac{1}{2}\left(\hat{\phi}+\phi^2\right)-K\right].
\end{eqnarray*}
We will emphasize that \(R\) without an index specifying the space we are working in, we will always be referring to embedded \(3\)-submanifolds. Whenever we are referring to the ambient spacetime or the marginally trapped surfaces foliating the \(3\)-manifolds \(R\) will be specifically indexed for that purpose. In the case that the spacelike \(3\)-manifold is foliated by \(2\)-surfaces that are marginally trapped, the Ricci tensor can be written as

\begin{eqnarray}\label{go80}
\begin{split}
R_{\mu\nu}&=\left[\frac{2}{3}\rho+\left(\mathcal{E}+\frac{1}{2}\Pi\right)-\phi\left(\Theta+\phi\right)\right]e_{\mu}e_{\nu}+\left[\frac{2}{3}\rho-\frac{1}{2}\left(\mathcal{E}+\frac{1}{2}\Pi\right)-\frac{3}{4}\phi\left(\Theta+\phi\right)\right]N_{\mu\nu},
\end{split}
\end{eqnarray}
where we have used \eqref{subbe5} and the vanishing of \(\Theta_k\). Thus the Ricci scalar becomes simply

\begin{eqnarray}\label{go81}
R=2\rho-\frac{5}{2}\phi\left(\Theta+\phi\right).
\end{eqnarray}

Throughout we will assume that the Ricci tensor on the \(3\)-manifolds do not vanish. Consequently we have the restriction 

\begin{eqnarray*}
\alpha,\beta\neq 0;\ \ R\neq 0
\end{eqnarray*}
on the horizon, which combines to give

\begin{eqnarray}\label{dff1}
\frac{2}{5}\rho\neq -3\left(\mathcal{E}+\frac{1}{2}\Pi\right).
\end{eqnarray}

The relationship between the geometry and the Ricci curvature of a Riemannian manifold has been extensively studied (see the references \cite{ham3,yano1}). In particular, bounds on the Ricci curvature have shed many insights on topological properties of Riemannian manifolds \cite{mye1}. If these manifolds are foliated by marginally trapped surfaces, what general properties, both geometric and topological, can be obtained?

We state and prove a compactness result for embedded \(3\)-manifolds in LRS II spacetimes, which depends on a first order spatial differential equation in the sheet expansion.

\begin{theorem}\label{th2}
Let \(M\) be an LRS II spacetime and \(\Xi\) an embedded spacelike \(3\)-manifold with \(R_{\mu\nu}\neq 0\). Furthermore, suppose
\begin{eqnarray}\label{theq1}
\hat{\phi}+\frac{3}{4}\phi^2-cK=0
\end{eqnarray}
is satisfied at all points of \(\Xi\). If on the marginally trapped surfaces foliating \(\Xi\) we have \(K>0\) with \(c=0\) or \(K<0\) with \(c=2\) for finite \(K\), then \(\Xi\) is compact.  
\end{theorem}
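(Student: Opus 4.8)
The plan is to deduce compactness from the \textbf{Bonnet--Myers theorem}: a complete Riemannian $3$-manifold whose Ricci tensor satisfies $\mathrm{Ric}\ge 2k\,g$ for a constant $k>0$ is compact, with diameter at most $\pi/\sqrt{k}$. Substituting the constraint \eqref{theq1} into the Ricci scalar \eqref{go79} already yields $R=2(1-c)K$, which is positive in both admissible cases ($c=0$ with $K>0$, and $c=2$ with $K<0$); but positivity of the scalar curvature alone is \emph{not} enough for Bonnet--Myers, so the real work is to control the full Ricci tensor. The natural first step is to read its eigenvalues off \eqref{go1}: the eigenvalue along $e^{\mu}$ is $\alpha=-(\hat{\phi}+\tfrac12\phi^2)$ with multiplicity one, and the eigenvalue along the $2$-sheet is $\beta=-[\tfrac12(\hat{\phi}+\phi^2)-K]$ with multiplicity two. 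I would then use \eqref{theq1} to eliminate $\hat{\phi}$ in favour of $\phi^2$ and $K$, reducing everything to the two quantities whose signs are fixed by hypothesis.

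For $c=0$ the constraint gives $\hat{\phi}=-\tfrac34\phi^2$, so that $\alpha=\tfrac14\phi^2$ and $\beta=K-\tfrac18\phi^2$. Positivity of $\alpha$ is immediate from $\phi>0$ on a marginally trapped surface, but $\beta>0$ does not follow from $K>0$ alone, so the technical heart of this case is to sharpen the bound on $\beta$. Here I would feed the constraint back into the propagation equation \eqref{subbe5} together with the Gaussian-curvature expression \eqref{go2}, both evaluated on the foliating surfaces (using $\tfrac23\Theta-\Sigma=-\phi$ and hence $\tfrac13\Theta+\Sigma=\Theta+\phi$, with the MTS cancellation reducing \eqref{go2} to $K=\tfrac13\rho-(\mathcal{E}+\tfrac12\Pi)$). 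This collapses to the manifestly signed relation $K=\tfrac34\phi^2+\phi\Theta+\rho$, whence $\beta=\tfrac58\phi^2+\phi\Theta+\rho$. Since $\Theta>0$ on a dynamical horizon and $\rho\ge 0$ by the weak energy condition, both eigenvalues are then strictly positive and the Ricci tensor is pointwise positive definite.

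To convert pointwise positivity into the Bonnet--Myers hypothesis I still need a \emph{uniform} positive lower bound on $\mathrm{Ric}$, i.e.\ the smaller eigenvalue $\alpha=\tfrac14\phi^2$ must be bounded away from zero. This is exactly where I expect the ``finite $K$'' assumption to enter, since $K=\tfrac34\phi^2+\phi\Theta+\rho$ ties the scale of $\phi$ to that of $K$ on the surfaces; with such a bound, and assuming (as is standard) geodesic completeness of $\Xi$, Bonnet--Myers closes the $c=0$ case at once.

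The main obstacle is the case $c=2$. There \eqref{theq1} reads $\hat{\phi}=2K-\tfrac34\phi^2$, and a direct computation gives $\alpha=-2K+\tfrac14\phi^2$, which is positive since $K<0$, but $\beta=-\tfrac18\phi^2$, which is strictly \emph{negative} and insensitive to the field equations. Thus the Ricci tensor acquires a negative eigenvalue in the sheet directions and the clean positive-Ricci route is simply unavailable; moreover the foliating surfaces are then hyperbolic ($K<0$). I therefore expect this case to require the separate, more delicate treatment anticipated in the introduction---controlling the sheet directions through the finiteness of $K$ and the behaviour of $\phi$ along $e^{\mu}$ dictated by the Riccati-type equation $\hat{\phi}=2K-\tfrac34\phi^2$---and it is precisely this tension, namely a would-be compact $3$-manifold of positive scalar curvature foliated by hyperbolic planes (forbidden by Hamilton's theorem), that the later sections exploit to rule $c=2$ out.
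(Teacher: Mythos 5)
Your proposal follows the same overall strategy as the paper's proof---derive \(R=\pm 2K>0\) from \eqref{theq1} and \eqref{go79} and invoke Bonnet--Myers---but you push the verification further than the paper does, and this is where the comparison becomes substantive. The paper's proof checks only that the \emph{scalar} curvature is positive (setting the bracketed term in \eqref{theq2} to \(\pm(n-1)\) and declaring \(m=K\) in \eqref{micheila1}); it never tests the full Ricci tensor against the Bonnet--Myers hypothesis. Your eigenvalue computation from \eqref{go1} is exactly the missing check: for \(c=0\) you get \(\alpha=\tfrac14\phi^2>0\) and, after running \eqref{subbe5} and \eqref{go2} through the MTS condition \(\tfrac23\Theta-\Sigma=-\phi\), \(\beta=K-\tfrac18\phi^2=\tfrac58\phi^2+\phi\Theta+\rho>0\) under the dynamical-horizon condition \(\Theta>0\) and the WEC (your on-horizon identity \(K=\tfrac34\phi^2+\phi\Theta+\rho\) is the corrected form of the paper's \eqref{theq9}; the paper's intermediate equation \eqref{theq8} drops a \(\tfrac14\phi^2\) term). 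For \(c=2\) you find \(\beta=-\tfrac18\phi^2<0\), so the Ricci tensor is \emph{not} positive definite and the positive-Ricci route is structurally unavailable. This correctly diagnoses that the paper's own proof of the \(c=2\) case does not go through as written: positive scalar curvature alone does not force compactness (e.g.\ \(\mathbb{S}^2\times\mathbb{R}\)), and the paper's identification of \(R\) with the left-hand side of \eqref{micheila1} conflates scalar and Ricci bounds.

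That said, as a proof of the theorem as stated your proposal has two genuine gaps. First, even for \(c=0\), pointwise positivity of the Ricci tensor does not suffice: Bonnet--Myers needs \(\mathrm{Ric}\geq 2k\,g\) for a single constant \(k>0\), together with completeness of \(\Xi\), which you assume without justification for an embedded submanifold. Your hope that ``finite \(K\)'' supplies the uniform bound is misplaced---finiteness of \(K\) controls the eigenvalues from \emph{above}, whereas what you need is \(\alpha=\tfrac14\phi^2\) bounded away from zero, i.e.\ a strictly positive infimum for \(\phi\) over \(\Xi\), which nothing in your argument (nor in the paper's) provides; if \(\phi\to 0\) somewhere on \(\Xi\) the diameter bound degenerates. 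Second, you do not prove the \(c=2\) half of the statement at all: deferring to Hamilton's theorem and Corollary \ref{ddfq} is circular in this context, since those later results are deployed in the paper \emph{after} compactness is granted, precisely in order to rule the \(c=2\) horizons out. So your proposal is more careful than the paper's proof where it works (\(c=0\), modulo the uniformity and completeness issues), but it establishes strictly less than the theorem claims---and in doing so it exposes that the \(c=2\) case is not actually supported by the paper's Bonnet--Myers argument either.
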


\begin{proof}
The mode of the proof will be to invoke the well known \textit{Bonnet-Myers theorem} \cite{mye1} which, simply put, implies compactness of an \(n\)-dimensional Riemannian manifold \(M\) if its Ricci curvature \(R\) is bounded below by 

\begin{eqnarray}\label{micheila1}
\left(n-1\right)m>0,
\end{eqnarray}
for some constant \(m\).

First it is clear that for \(K>0\) we have (from \eqref{gau1},\eqref{gau2}) \(\dot{K},\hat{K}>0\), so \(K\) is positive all over \(\Xi\). We may rewrite \eqref{go79} as

\begin{eqnarray}\label{theq2}
R=\left[-\frac{2}{K}\left(\hat{\phi}+\frac{3}{4}\phi^2\right)+2\right]K,
\end{eqnarray}
so that \(R\) assumes the form of the left hand side of \eqref{micheila1}, with \(m=K\). The bracketized term of \eqref{theq2} can now be equated to \(\left(n-1\right)\) as (for \(n=3\)):

\begin{eqnarray}\label{theq3}
\left(n-1\right)=2=-\frac{2}{K}\left(\hat{\phi}+\frac{3}{4}\phi^2\right)+2,
\end{eqnarray}
so that

\begin{eqnarray}\label{theq4}
\hat{\phi}+\frac{3}{4}\phi^2=0.
\end{eqnarray}
Then if \(K>0\) we will have \(\left(n-1\right)K=2K>0\). Similarly, we can set

\begin{eqnarray}\label{theq5}
-\left(n-1\right)=-2=-\frac{2}{K}\left(\hat{\phi}+\frac{3}{4}\phi^2\right)+2,
\end{eqnarray}
so that

\begin{eqnarray}\label{theq6}
\hat{\phi}+\frac{3}{4}\phi^2-2K=0.
\end{eqnarray}
Then if \(K<0\) we will have \(\left(n-1\right)\left(-K\right)=-2K>0\). In either case we have \(R>0\) with \(R=\pm 2K\)(`\(+\)' is for the case of \(K>0\) and `\(-\)' for the case of \(K<0\)). Thus we have \(\Xi\) being compact.\qed

\end{proof}

Notice that this result holds in general for embedded \(3\) manifolds in LRS II spacetimes and not just marginally trapped tubes, though this work restricts tomarginally trapped tubes. Here we note the implication of Theorem \ref{th2} that \textit{given an LRS II spacetime \(M\) and a compact \(3\)-surface \(\Xi\subset M\), if \(\hat{\phi}+\left(3/4\right)\phi^2-cK=0\) is satisfied on \(\Xi\) for \(c\in \lbrace{0,2\rbrace}\) - where \(K\) is the Gaussian curvature of \(2\)-surfaces \(S\) in \(M\), then the smooth embedding \(\varphi:S\longrightarrow \Xi\) of \(S\) into \(\Xi\) preserves the Ricci curvature}.

Notice here that we do not specify the geometry of the marginally trapped surfaces. While we are stating that we are considering the cases for \(K>0\) and \(K<0\) on the marginally trapped surfaces, we may interchangeably specify that the marginally trapped surfaces are spherical or hyperbolic in the respective cases. The Gaussian curvature \(K\) may vary over the marginally trapped surface but the sign is required to be fixed.

Our interest throughout this work will be to study various properties of the compact horizon types of Theorem \ref{th2}. Notice that as long as the kernel of the left hand side of \eqref{theq1} is nonempty (here we are viewing the left hand side of \eqref{theq1} as a function \(\Phi:\Xi\longrightarrow \mathbb{R}\)), then there is always a subset of \(\Xi\) that is compact, and all results that are to follow will hold as well on such subsets.

\subsection{Geometry of compact dynamical  horizons in LRS II spacetimes}

As a dynamical horizon evolves its geometry and topology may change. A consequence of this is that the thermodynamic quantities also evolve. The relationship between the geometry and thermodynamics may be investigated through analysis of the time evolution of the \(3\)-metric on the horizon. The condition in \eqref{theq1} provides a constraint on the subclass of spacetimes admitting compact horizons foliated by marginally trapped surfaces on which \(K>0\) or \(K<0\). There is an obstruction to the existence of the compact case with \(K<0\), and this will be shown. As the compactness we have determined is governed by differential equations of geometric quantities on \(\Xi\), we will now study properties of these horizons and in effect how the geometric evolution constrains the geometry and thermodynamics of \(\Xi\).

\subsubsection{Compact dynamical horizons in LRS II spacetimes foliated by marginally trapped surfaces with \(K>0\)}

Let us start with the case of \(c=0\), where \(K>0\) on the MTSs. Comparing \eqref{theq1} for \(c=0\) and \eqref{subbe5} we obtain

\begin{eqnarray}\label{theq8}
-\frac{2}{3}\rho-\mathcal{E}-\frac{1}{2}\Pi+\left(\frac{1}{3}\Theta+\Sigma\right)\left(\frac{2}{3}\Theta-\Sigma\right)=0,
\end{eqnarray}
which, on the horizon, simplifies \(K\) as 
\begin{eqnarray}\label{theq9}
\begin{split}
K&=\rho-\left(\frac{2}{3}\Theta-\Sigma\right)\left(\frac{4}{3}\Theta+\frac{5}{4}\phi\right)\\
&=\rho+\phi\left(\frac{4}{3}\Theta+\frac{5}{4}\phi\right).
\end{split}
\end{eqnarray}
Therefore since we must have \(\phi,\Theta>0\) (for a DH \(\Theta>0\)), it is sufficient to specify the energy density as positive on \(\Xi\) (it may be negative away from \(\Xi\)), though it is not necessary, so that \(K\) is always positive. 

\begin{proposition}\label{th3} 
Let \(M\) be an LRS II spacetime and let \(\Xi\) be a compact horizon in \(M\) foliated by marginally trapped \(2\)-surfaces with \(K>0\), satisfying \eqref{theq1} for \(c=0\). If the induced metric \(h_{\mu\nu}\) on \(\Xi\) is time dependent, then at some time \(T\in \left(a,t_{end}\right]\) for \(a>t_0\) (where \(t_0\) is the initial time of the metric evolution and \(t_{end}<\infty)\), there exists a metric on \(\Xi\) for which \(\Xi\) neither absorbs nor emits radiation.
\end{proposition}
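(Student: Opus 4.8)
The plan is to read ``\(\Xi\) neither absorbs nor emits radiation'' as the statement that, at the time \(T\), the evolving marginally trapped leaf is instantaneously isolated: the net energy flux across it vanishes. In the \(1+1+2\) language this is the non-expanding condition \(\mathcal{L}_k\Theta_k=0\), equivalently (from \eqref{gghh5}) the vanishing of \(C\), i.e. \(f(t):=\rho+p+\Pi-2Q=0\); it is the covariant counterpart of the spacelike marginally trapped tube settling onto a null, isolated one, as in the collapse scenario recalled in the Introduction. The goal is therefore to produce \(T\in(a,t_{end}]\) at which the continuous scalar \(f\) vanishes along the metric evolution.

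First I would fix the geometric data forced by the \(c=0\) hypothesis. From \eqref{theq1} one has \(\hat{\phi}=-\tfrac34\phi^2\), while \eqref{gau1}--\eqref{gau2} together with the marginal-trapping relation \(\tfrac23\Theta-\Sigma=-\phi\) give \(\dot{K}=\hat{K}=\phi K>0\); with \eqref{theq9} this shows \(K=\rho+\phi(\tfrac43\Theta+\tfrac54\phi)\) is strictly positive and increasing, so by Theorem \ref{th2} each slice \(\Xi(t)\) is a compact \(3\)-manifold of positive Ricci curvature with \(R=2K\), hence spherical by Hamilton's result recalled in the Introduction. This is precisely the class on which the Ricci-flow evolution used throughout this section is controlled: it runs for finite time and drives \(h_{\mu\nu}\) toward the constant-curvature (Einstein) metric, on which the anisotropy content of \eqref{go80} satisfies the balance \(\mathcal{E}+\tfrac12\Pi=\tfrac16\phi(\Theta+\phi)\).

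The analytic heart is then an extremal argument over the finite interval. I would show that \(f\) is \(C^1\) and bounded on \((a,t_{end}]\) --- boundedness coming from compactness of \(\Xi\) and \(t_{end}<\infty\) --- and that the flow forces \(f\) monotonically toward its isolated value \(0\): using \eqref{subbe2}, \eqref{subbe4} and the energy balance \eqref{subbe8}, the rate \(\dot f\) is tied to the deviation of \(h_{\mu\nu}(t)\) from the Einstein fixed point, which the convergence of the flow drives to zero. A bounded monotone function on \((a,t_{end}]\) attains its extremal value, and since that value is \(f=0\) there is \(T\in(a,t_{end}]\) with \(f(T)=0\); at \(T\) the leaf is non-expanding, so \(\Xi\) neither absorbs nor emits radiation and \(h_{\mu\nu}(T)\) is the asserted metric.

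The hard part is two-fold. First, linking the abstract deformation of the \(3\)-metric to the covariant flux scalar \(f\): the \(3\)-metric alone does not fix every matter scalar, so I must use the commutation relation \eqref{ghh1} to propagate the constraint \eqref{theq1} and the marginal-trapping relation in time and thereby express \(\dot f\) through geometric quantities the flow controls. Second, and most delicately, upgrading the asymptotic approach to the isolated state into attainment of \(f=0\) at a finite \(T\le t_{end}\) rather than only in the limit \(t\to t_{end}\); this is where the finiteness \(t_{end}<\infty\), the strict monotonicity of \(K\), and the closedness of the interval at \(t_{end}\) must be combined.
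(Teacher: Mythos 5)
There is a genuine gap, and it starts with your reading of the conclusion. In this paper ``\(\Xi\) neither absorbs nor emits radiation'' means the vanishing of the heat flux scalar, \(Q=0\), not the non-expanding condition \(\mathcal{L}_k\Theta_k=0\) (equivalently \(C=0\), \(\rho+p+\Pi=2Q\)) that you take as your target. Indeed your target is incompatible with the hypotheses: \(\Xi\) is assumed to be a \emph{dynamical} horizon, so \(C>0\) and \(\rho+p+\Pi>2Q\) hold throughout by the standing assumptions of Section II; driving \(C\) to zero would turn \(\Xi\) into a non-expanding horizon and contradict the setup rather than prove the proposition. The paper's argument is entirely different and essentially algebraic: in LRS II one computes \(u^{\delta}\nabla_{\delta}h_{\mu\nu}=2Ae_{(\mu}u_{\nu)}\), equates this to the normalized Ricci flow right-hand side \(-2R_{\mu\nu}+\tfrac{2}{3}Rh_{\mu\nu}=\tfrac{2}{3}\left(\beta-\alpha\right)\left(2e_{\mu}e_{\nu}-N_{\mu\nu}\right)\) (up to the paper's factor conventions), and contracts with \(u^{\mu}u^{\nu}\), \(e^{\mu}e^{\nu}\), \(u^{(\mu}e^{\nu)}\), \(N^{\mu\nu}\) to force the two independent constraints \(A=0\) and \(\alpha=\beta\) at some time \(T\). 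Then \(\alpha=\beta\) gives \(K=-\tfrac{1}{2}\hat{\phi}\), which with \eqref{theq1} at \(c=0\) yields \(K=\tfrac{3}{8}\phi^{2}\); differentiating this along \(u^{\mu}\), comparing with \eqref{gau1}, and inserting \eqref{subbe2} with \(A=0\) produces \(\phi Q=0\) on the horizon, and \(\phi=0\) is excluded because it would give \(K=0\), \(R=0\), contradicting Hamilton's preservation of \(R>0\) under the flow. Hence \(Q=0\).

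Even setting aside the misidentified conclusion, the analytic core of your proposal is asserted rather than proved: you do not establish that your scalar \(f=\rho+p+\Pi-2Q\) is monotone under the flow, that the flow converges to the Einstein metric within \((a,t_{end}]\), or that \(f\) \emph{attains} zero at a finite \(T\le t_{end}\) rather than only in the limit --- you flag these yourself as ``the hard part,'' but they are exactly the content a proof would need, and no mechanism from the field equations \eqref{subbe1}--\eqref{subbe9} is supplied to deliver them. Your side computation that the Einstein condition \(\alpha=\beta\) forces \(\mathcal{E}+\tfrac{1}{2}\Pi=\tfrac{1}{6}\phi\left(\Theta+\phi\right)\) is correct and consistent with \eqref{go80}, but it is never connected to the flux: the missing step is precisely the paper's consistency check between \(K=\tfrac{3}{8}\phi^{2}\) and the evolution equations \eqref{gau1} and \eqref{subbe2}, which is what converts the geometric degeneracy \(\alpha=\beta\), \(A=0\) into the thermodynamic statement \(Q=0\).
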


\begin{proof}
We shall utilize the Ricci flow geometric evolution equation \cite{ham3} and show that the constraints generated by the flow implies \(Q\) must vanish. We will assume that the time coordinate parametrizes the family of metric on \(\Xi\).

The covariant time derivative of the metric on \(\Xi\) is given by

\begin{eqnarray}\label{theq10}
\begin{split}
u^{\delta}\nabla_{\delta}h_{\mu\nu}&=u^{\delta}\nabla_{\delta}\left(g_{\mu\nu}+u_{\mu}u_{\nu}\right)\\
&=u^{\delta}\nabla_{\delta}\left(u_{\mu}u_{\nu}\right)\\
&=u_{(\mu}\dot{u}_{\nu)}\\
&=2Ae_{(\mu}u_{\nu)}.
\end{split}
\end{eqnarray}
For a compact \(3\)-manifold (Riemannian) the Ricci flow equation is normalized as 

\begin{eqnarray}\label{theq11}
\begin{split}
u^{\delta}\nabla_{\delta}h_{\mu\nu}=-2R_{\mu\nu}+\frac{2}{3}Rh_{\mu\nu},
\end{split}
\end{eqnarray}
which for LRS II spacetimes can explicitly be written as

\begin{eqnarray}\label{theq12}
\begin{split}
Ae_{(\mu}u_{\nu)}=\frac{1}{3}\left(\beta-\alpha\right)\left(2e_{\mu}e_{\nu}-N_{\mu\nu}\right),
\end{split}
\end{eqnarray}
where the round brackets on the indices denote symmetrization. Contracting \eqref{theq12} by \(u^{\mu}u^{\nu},e^{\mu}e^{\nu},u^{(\mu}e^{\nu)}\) and \(N_{\mu\nu}\) we obtain the two independent equations

\begin{subequations}
\begin{align}
A&=0,\label{sube415}\\
\left(\beta-\alpha\right)&=0,\label{sube416}
\end{align}
\end{subequations}
in which case \eqref{theq12} is satisfied. Thus, as the metric evolves, \(A\) tends to zero and \(\alpha\) tends to \(\beta\), and this occurs at time 

\begin{eqnarray}\label{theq13}
\begin{split}
t=T\vline_{(A=0,\alpha=\beta)}.
\end{split}
\end{eqnarray}
The condition \(\alpha=\beta\) implies that \(K=-\left(1/2\right)\hat{\phi}\), which using \eqref{theq1} for \(c=0\) gives

\begin{eqnarray}\label{theq14}
K=\frac{3}{8}\phi^2.
\end{eqnarray}
Taking the dot derivative of \eqref{theq14} and comparing to \eqref{gau1} we obtain

\begin{eqnarray}\label{theq15}
\frac{3}{4}\phi\dot{\phi}=-\left(\frac{2}{3}\Theta-\Sigma\right)K,
\end{eqnarray}
which, upon inserting \eqref{subbe2} and noting that \(A=0\), yields a constraint on \(Q\)

\begin{eqnarray}\label{theq16}
\frac{3}{4}\phi Q=-\left(\frac{2}{3}\Theta-\Sigma\right)\left(K-\frac{3}{8}\phi^2\right).
\end{eqnarray}
On the horizon \eqref{theq16} simplifies as (using \eqref{theq14} to substitute for \(K\))

\begin{eqnarray}\label{theq17}
\phi Q=0,
\end{eqnarray}
in which case we must have either \(\phi=0\) or \(Q=0\) on \(\Xi\). If \(\phi=0\), then, from \eqref{theq14}, one has \(K=0\) and consequently \(R=0\). It is well known \cite{ham1} that, for a compact 3-manifold, if \(R>0\) for an initial metric (\(R=2K>0\) for the initial metric in the case considered here), then it holds true for all times \(t\), and hence we can rule out the case \(\phi=0\). Therefore we have that \(Q=0\) and the result follows.\qed
\end{proof}

We see in this case that \(K\) is always positive from \eqref{theq14}. In fact it can also be shown that for times \(T\neq T\vline_{(A=0,\alpha=\beta)}\), we have the following estimate for \(K\):

\begin{eqnarray}\label{happy}
K\geq \frac{1}{8}\phi^2.
\end{eqnarray}
To see this, we recall \cite{ham1} that, for a \(3\)-manifold with positive scalar curvature (or Ricci tensor), one has the estimate

\begin{eqnarray}\label{happy1}
\frac{1}{3}R^2\leq |R_{\mu\nu}|^2\leq R^2.
\end{eqnarray}
Using \(R_{\mu\nu}=\alpha e_{\mu}e_{\nu}+\beta N_{\mu\nu}\), we write \eqref{happy1} explicitly as

\begin{eqnarray}\label{happy3}
\frac{1}{3}\left(\alpha+2\beta\right)^2\leq \alpha^2+2\beta^2\leq \left(\alpha+2\beta\right)^2,
\end{eqnarray}
which can be split as

\begin{subequations}
\begin{align}
\frac{1}{3}\left(\alpha+2\beta\right)^2&\leq \alpha^2+2\beta^2\notag\\
\implies -\frac{2}{3}\left(\alpha-\beta\right)^2&\leq 0,\label{happy4}\\
\alpha^2+2\beta^2&\leq \left(\alpha+2\beta\right)^2\notag\\
\implies -2\beta R&\leq 0.\label{happy5}
\end{align}
\end{subequations}
Notice that \eqref{happy4} always holds. For \eqref{happy5} to hold, since \(R>0\), we must have \(\beta\geq 0\). Explicitly, noting that \(\hat{\phi}+(3/4)\phi^2=0\), we may write \(\beta\) as

\begin{eqnarray}\label{happy6}
\beta=K-\frac{1}{8}\phi^2,
\end{eqnarray}
and hence the result follows. As a consequence we have that 

\begin{eqnarray}\label{happy6}
R\geq \frac{1}{4}\phi^2.
\end{eqnarray} 
(Notice that \eqref{theq14} satisfies the estimate \eqref{happy}). Indeed, it makes sense intuitively that the sheet expansion controls the `growth' of the curvature \(R\), since the Gaussian curvature of the marginally trapped surfaces determines \(R\).

Hypothetically, consider this case: Let us consider a scenario where the horizon \(\Xi\) evolves along \(u^a\), so that at each time \(t\) of the horizon evolution we have an associated metric, a solution to \eqref{theq11}. Then Proposition \ref{th3} presents a situation where it is possible that \(i).\) the horizon may radiate for some time, after which it stops radiating, \(ii).\) the horizon has been non-radiating since its formation (we are assuming here a formation from an astrophysical collapse) or, \(iii).\) the horizon is initially absorbing radiation and after time \(T\vline_{(A=0,\alpha=\beta)}\) it stops absorbing radiation. From \eqref{theq14} \(K\) stays positive throughout the evolution of the metric on \(\Xi\), and so the geometry of the foliation is fixed. Insight into the general properties of these horizon types would require a thorough analysis to check consistency of the field equations on these horizons. For example, without a detailed and careful analysis of the field equations on these horizon types, one might wrongly assert that during the evolution of the metric, one goes from a positive definite metric to a negative definite one. To see this, we recall from \cite{ib3} that if \(\tilde{\epsilon}\) denotes the area form on the \(2\)-surfaces, then Lie dragging \(\tilde{\epsilon}\) along \(\mathcal{V}\) gives

\begin{eqnarray}\label{theq200}
\mathcal{L}_{\mathcal{V}}\tilde{\epsilon}=-C\Theta_l\tilde{\epsilon},
\end{eqnarray} 
so that expansion and contraction of an marginally trapped tube is in essence determined by the metric signature on the marginally trapped tube (noting that \(\Theta_l<0\)): A marginally trapped tube is timelike (\(C<0\)) if and only if it contracts (\(\Theta<0\)) and spacelike (\(C>0\)) if and only if it expands (\(\Theta>0\)). Let us consider times \(t\geq T\vline_{(A=0,\alpha=\beta)}\) when \(Q\) vanishes. Take the dot derivative of \eqref{theq8} and use \eqref{subbe1}, \eqref{subbe3} and \eqref{subbe8}. After some simplification, the resulting equation on the horizon simplifies to 

\begin{eqnarray}\label{theq200}
\Theta\phi\left(2\Theta+\frac{1}{2}\phi\right)=0.
\end{eqnarray} 
Since \(\Theta\phi\) cannot be zero on the horizon we must have \(\Theta=-\left(1/4\right)\phi\), and noting that \(\phi>0\) results in the conclusion that \(\Theta<0\) on the horizon.

The problem with this is that \(\Theta\) is smooth, and therefore one expects that the transition has to go through a \textit{non-expanding} phase if \(\Theta\) is to become negative, i.e. to occur as \(\Theta\to 0 \to \mathbb{R}^-\). We shall see that in fact the metric does become degenerate at time \(T\vline_{(A=0,\alpha=\beta)}\), i.e. \(\Theta=0\). 

\begin{proposition}\label{th3} 
Let \(M\) be an LRS II spacetime and let \(\Xi\) be a compact horizon in \(M\) foliated by marginally trapped \(2\)-surfaces with \(K>0\), satisfying \eqref{theq1}, and let the induced metric \(h_{\mu\nu}\) on \(\Xi\) is time dependent. Then \eqref{theq11} admits no solutions for time \(t= T\vline_{(A=0,\alpha=\beta)}\).
\end{proposition}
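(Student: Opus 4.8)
The plan is to argue by contradiction: I would assume that at $t=T\vline_{(A=0,\alpha=\beta)}$ the normalised Ricci flow \eqref{theq11} does admit a genuine (Riemannian) solution on $\Xi$, and then show that the accompanying constraints force $\Theta=0$, so that the induced metric $h_{\mu\nu}$ degenerates and cannot be a solution of a flow defined only for positive-definite $3$-metrics. First I would record everything already pinned down at $T$: the contractions \eqref{sube415} and \eqref{sube416} of the flow impose $A=0$ and $\alpha=\beta$, the latter giving $K=(3/8)\phi^2$ via \eqref{theq14}; the preceding proposition forces $Q=0$; and the compactness condition \eqref{theq4} supplies $\hat\phi=-(3/4)\phi^2$. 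These should be the only inputs required.

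The central step is to establish $\Theta=0$ at $T$. I would evaluate there the identity obtained by dot-differentiating \eqref{theq8} — displayed above as $\Theta\phi\left(2\Theta+\tfrac12\phi\right)=0$ — which is legitimate precisely because $A=Q=0$ at $T$. It factorises into the branches $\phi=0$, $\Theta=0$, and $\Theta=-(1/4)\phi$. The branch $\phi=0$ is discarded exactly as in the preceding proposition, since it yields $K=0$ and hence $R=0$, contradicting the persistence of $R=2K>0$ from the initial metric. The branch $\Theta=-(1/4)\phi<0$ (using $\phi>0$) is the timelike-membrane continuation available only for $t>T$; but $\Xi$ is a dynamical horizon for $t<T$, where $\Theta>0$, so smoothness of $\Theta$ in $t$ forces a zero crossing, and $T$ — the instant at which the flow attains its Einstein configuration $\alpha=\beta$ with $A=0$ — is exactly that crossing. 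Hence the admissible branch at $T$ is $\Theta=0$.

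Finally I would convert $\Theta=0$ into degeneracy of $h_{\mu\nu}$. From the area-form relation $\mathcal{L}_{\mathcal V}\tilde\epsilon=-C\Theta_l\tilde\epsilon$ together with $\Theta_l<0$, the sign of $C$ coincides with the sign of $\Theta$, so $\Theta=0$ gives $C=0$: the marginally trapped tube is momentarily a non-expanding (null) horizon, on which the induced metric is degenerate rather than positive definite. Since \eqref{theq11} is the normalised Ricci flow for a compact \emph{Riemannian} $3$-metric, it can have no solution at $t=T$, which is the assertion.

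The step I expect to be most delicate is locating the sign change of $\Theta$ exactly at $T$ and thereby selecting the $\Theta=0$ branch over $\Theta=-(1/4)\phi$; this rests on the continuity of $\Theta$ across $T$ and the dynamical-horizon sign $\Theta>0$ for $t<T$, and on a clean argument that a degenerate induced metric genuinely obstructs \eqref{theq11} rather than merely signalling a change of causal character. I would also want to verify that the simultaneous constraints $A=0$, $\alpha=\beta$, $Q=0$ and $K=(3/8)\phi^2$ remain consistent with the field equations at $T$, so that the contradiction is pinned to the metric signature and not hidden elsewhere.
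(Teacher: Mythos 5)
Your proposal is correct in outline and reaches the paper's conclusion, but by a genuinely different route. The paper's proof makes no use of the constraint \eqref{theq200} and no continuity-in-time argument: it tests the integrability of the field equations at \(t=T\vline_{(A=0,\alpha=\beta)}\) by applying the dot--hat commutation relation \eqref{ghh1} to the pairs \eqref{subbe1}/\eqref{subbe4} and \eqref{subbe2}/\eqref{subbe5}, obtaining first the purely algebraic constraint \(\phi\Theta^{2}=0\) (hence \(\Theta=0\), since \(\phi=0\) is already excluded), and then \(\Sigma\left[\frac{3}{2}\Sigma^{2}+\frac{1}{2}\left(\rho+3p\right)\right]=0\), all of whose branches fail: \(\Sigma=0\) combined with \(\Theta=0\) and \(\Theta_k=0\) forces \(\phi=0\); a complex \(\Sigma\) is impossible; and \(\rho+3p<0\) entails WEC violation or \(p<0\), excluded by the standing assumptions. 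This argument is local at the instant \(T\) — it needs nothing about the horizon's history — and it establishes the stronger fact that the field equations themselves are inconsistent at \(T\), so the non-existence of solutions does not hinge on reading a degenerate metric as an obstruction to the Riemannian flow. You instead select the branch \(\Theta=0\) of \eqref{theq200} via continuity of \(\Theta\) across \(T\) together with the dynamical-horizon sign \(\Theta>0\) for \(t<T\), and then conclude from degeneracy of the induced metric. That is defensible: \eqref{theq200} does hold at \(T\) itself, since \(A=Q=0\) hold identically on \(\Xi\) there (so \(\hat{A}=\hat{Q}=0\) as well) and \eqref{theq8} holds at all times, and your argument could even be shortened — on a dynamical horizon \(\Theta>0\) and \(\phi>0\) make every factor of \eqref{theq200} strictly positive, so the constraint alone already contradicts the DH character at \(T\) without any branch bookkeeping. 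What your route buys is brevity and a transparent geometric mechanism (the forced passage through a null, non-expanding phase, \(C=0\)); what it costs is reliance on the continuity/sign reasoning that the paper deploys only as heuristic motivation in the paragraph preceding the proposition, and the loss of the paper's additional conclusion that even the null configuration \(\Theta=0\) cannot be sustained consistently with the field equations.
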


\begin{proof}
We will proceed with the proof by showing that as the induced metric is evolved \(\Xi\) becomes null for time \(t= T\vline_{(A=0,\alpha=\beta)}\), i.e. \(\Theta=0\). In this case we shall show that either \(\phi=0\) (this was ruled out on grounds that the case \(\phi=0\implies K=0\implies R=0\) which is not possible), or the shear scalar \(\Sigma\) is complex valued, or the strong energy condition has to be violated in which case it can be shown that the weak energy condition has to be violated or otherwise the isotropic pressure is negative (here we are assuming the generecity condition in which case \(\rho+p\neq 0\)). We apply the commutation relation in \eqref{ghh1}, on the pairs of evolution and propagation equations \eqref{subbe1} and \eqref{subbe4}, \eqref{subbe2} and \eqref{subbe5} and \eqref{subbe3} and \eqref{subbe6}. Taking the hat and dot derivatives of  \eqref{subbe1} and \eqref{subbe4} we obtain respectively (after simplifications)

\begin{eqnarray}\label{subeas1}
\begin{split}
\frac{2}{3}\hat{\dot{\Theta}}-\hat{\dot{\Sigma}}&=-\frac{3}{2}\phi\left[\Sigma\left(\frac{2}{3}\Theta-\Sigma\right)+\mathcal{E}+\frac{1}{2}\Pi\right]-\left(\hat{p}+\hat{\Pi}\right)\\
&=\frac{3}{2}\phi\left[-\Sigma\left(\frac{2}{3}\Theta-\Sigma\right)-\mathcal{E}+\frac{1}{2}\Pi\right],
\end{split}
\end{eqnarray}
and

\begin{eqnarray}\label{subeas4}
\frac{2}{3}\dot{\hat{\Theta}}-\dot{\hat{\Sigma}}=\frac{3}{2}\phi\left(\frac{4}{9}\Theta^2-\Theta\Sigma-\mathcal{E}+\frac{1}{2}\Pi\right).
\end{eqnarray}
Subtracting \eqref{subeas4} from \eqref{subeas1} we obtain

\begin{eqnarray}\label{subeas7}
\begin{split}
\left(\frac{2}{3}\hat{\dot{\Theta}}-\hat{\dot{\Sigma}}\right)-\left(\frac{2}{3}\dot{\hat{\Theta}}-\dot{\hat{\Sigma}}\right)&=-\frac{3}{2}\phi\left(\frac{4}{9}\Theta^2-\frac{1}{3}\Theta\Sigma-\Sigma^2\right).
\end{split}
\end{eqnarray}
Now, using the commutation relation \eqref{ghh1} we have

\begin{eqnarray}\label{subeas10}
\left(\frac{2}{3}\hat{\dot{\Theta}}-\hat{\dot{\Sigma}}\right)-\left(\frac{2}{3}\dot{\hat{\Theta}}-\dot{\hat{\Sigma}}\right)=\frac{3}{2}\phi\left(\frac{1}{3}\Theta\Sigma+\Sigma^2\right).
\end{eqnarray}
Comparing \eqref{subeas7} and \eqref{subeas10} we have the constraint

\begin{eqnarray}\label{subeas13}
0=\phi\Theta^2.
\end{eqnarray}
From \eqref{subeas13} we must have \(\Theta=0\) or \(\phi=0\) (the case \(\phi=0\) has already been ruled out).

Next, substituting \(\Theta=0\) and taking the hat and dot derivatives of  \eqref{subbe2} and \eqref{subbe5} we obtain (after simplifications)

\begin{eqnarray}\label{subeas2}
\hat{\dot{\phi}}=\frac{1}{2}\Sigma\left[-2\phi^2-\Sigma^2-\frac{2}{3}\rho-\left(\mathcal{E}+\frac{1}{2}\Pi\right)\right],
\end{eqnarray}
and

\begin{eqnarray}\label{subeas5}
\dot{\hat{\phi}}=\frac{1}{2}\Sigma\left[-\phi^2-2\Sigma^2-\frac{1}{3}\rho-3p+\left(\mathcal{E}+\frac{1}{2}\Pi\right)\right].
\end{eqnarray}
Subtracting \eqref{subeas5} from \eqref{subeas2} we obtain

\begin{eqnarray}\label{subeas8}
\hat{\dot{\phi}}-\dot{\hat{\phi}}=\Sigma\left[\frac{1}{2}\Sigma^2-\frac{1}{2}\phi^2-\frac{1}{6}\rho+\frac{3}{2}p-\left(\mathcal{E}+\frac{1}{2}\Pi\right)\right].
\end{eqnarray}
Now, using the commutation relation \eqref{ghh1} we have

\begin{eqnarray}\label{subeas11}
\hat{\dot{\phi}}-\dot{\hat{\phi}}=-\Sigma\left[\frac{1}{2}\phi^2+\Sigma^2+\frac{2}{3}\rho+\left(\mathcal{E}+\frac{1}{2}\Pi\right)\right].
\end{eqnarray}
Comparing \eqref{subeas8} to \eqref{subeas11} we have the following constraint:

\begin{eqnarray}\label{subeas14}
0=\Sigma\left[\frac{3}{2}\Sigma^2+\frac{1}{2}\left(\rho+3p\right)\right].
\end{eqnarray}
Hence from \eqref{subeas14} we have that either \(\Sigma=0\) or \(\frac{3}{2}\Sigma^2+\frac{1}{2}\left(\rho+3p\right)=0\) in which case either the strong energy condition is violated, i.e. \(\rho+3p<0\) (in this case, if the weak energy condition is to be satisfied, i. e. \(\rho+p>0\), then it is not very difficult to see that \(p<0\), and we are not interested in this case), or that the shear scalar \(\Sigma\in \mathbb{C}\), the set of complex numbers, which we can rule out. If \(\Sigma=0\), then from the vanishing of \(\frac{2}{3}\Theta-\Sigma+\phi\) we also have that \(\phi=0\) on \(\Xi\) (since \(\Theta\) is also zero), and we have already ruled out the case \(\phi=0\). Consequently, we rule out solutions at time \(t= T\vline_{(A=0,\alpha=\beta)}\).\qed
\end{proof}

\subsubsection{Compact dynamical horizons in LRS II spacetimes foliated by marginally trapped surfaces with \(K<0\)}

Next, we consider the case of \(c=2\), where the marginally trapped surfaces are \(2\)-surfaces on which \(K<0\). Comparing \eqref{theq1} for \(c=2\) and \eqref{subbe5} we obtain

\begin{eqnarray}\label{theq18}
\begin{split}
0&=\left(\frac{2}{3}\Theta+\frac{1}{2}\Sigma\right)\left(\frac{2}{3}\Theta-\Sigma\right)-\frac{4}{3}\rho-\frac{1}{2}\phi^2+\mathcal{E}+\frac{1}{2}\Pi,
\end{split}
\end{eqnarray}
which, on the horizon, simplifies \(K\) as 
\begin{eqnarray}\label{theq19}
\begin{split}
K&=-\frac{1}{2}\phi^2-\rho+\left(\frac{2}{3}\Theta-\Sigma\right)\left(\Theta+\frac{1}{2}\phi\right)\\
&=-\phi^2-\rho-\Theta\phi.
\end{split}
\end{eqnarray}
Again, it is sufficient to specify the energy density as positive on \(\Xi\), in which case \(K\) is always negative. Let us now state and prove the following

\begin{theorem}\label{th4} 
Let \(M\) be an LRS II spacetime and let \(\Xi\) be a compact horizon in \(M\) foliated by marginally trapped \(2\)-surfaces with \(K<0\), satisfying \eqref{theq1} for \(c=2\). If the induced metric \(h_{\mu\nu}\) on \(\Xi\) is time dependent, then at some time \(T\in \left(a,t_{end}\right]\) for \(a>t_0\) (where \(t_0\) is the initial time of the metric evolution and \(t_{end}<\infty\)), \(K\) is strictly positive.
\end{theorem}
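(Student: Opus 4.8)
The plan is to run the same normalized Ricci-flow argument that produced \eqref{theq14} in the $c=0$ case, the point being that the flow machinery is blind to the value of $c$ and only the concluding algebraic step is sensitive to it. First I would observe that the covariant time derivative of the induced metric is fixed by the $1+1+2$ kinematics alone: the computation \eqref{theq10} giving $u^{\delta}\nabla_{\delta}h_{\mu\nu}\propto e_{(\mu}u_{\nu)}$ makes no reference to $c$. Equating this to the normalized flow \eqref{theq11} and expanding $R_{\mu\nu}=\alpha e_{\mu}e_{\nu}+\beta N_{\mu\nu}$ with $R=\alpha+2\beta$ reproduces \eqref{theq12} unchanged.

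Next I would project \eqref{theq12} against $u^{\mu}u^{\nu}$, $e^{\mu}e^{\nu}$, $u^{(\mu}e^{\nu)}$ and $N_{\mu\nu}$. Because the left-hand side is supported purely on the mixed $e_{(\mu}u_{\nu)}$ direction while the right-hand side lives on $e_{\mu}e_{\nu}$ and $N_{\mu\nu}$, the only independent constraints to survive are \eqref{sube415} and \eqref{sube416}, namely $A=0$ and $\alpha=\beta$, attained at the time $T\vline_{(A=0,\alpha=\beta)}$ of \eqref{theq13}. Time dependence of $h_{\mu\nu}$ is what guarantees the flow actually evolves to this configuration; this portion of the argument is identical to the $c=0$ case and transfers verbatim.

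The only step that feels the change of $c$ is the final reduction. As before, $\alpha=\beta$ forces $K=-\tfrac{1}{2}\hat{\phi}$; but now I would substitute the $c=2$ form of \eqref{theq1}, $\hat{\phi}=2K-\tfrac{3}{4}\phi^{2}$, rather than its $c=0$ counterpart. Solving, the two relations collapse to $2K=\tfrac{3}{8}\phi^{2}$, that is $K=\tfrac{3}{16}\phi^{2}$. Since $\phi>0$ on every horizon foliated by marginally trapped surfaces, $K$ is strictly positive at $t=T\vline_{(A=0,\alpha=\beta)}$, which is exactly the assertion.

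The hard part is not this short computation but recognizing its force. By \eqref{gau1} and \eqref{gau2} the dot and hat derivatives of $K$ are proportional to $K$ itself, so the sign of $K$ is frozen along both evolution and propagation; a foliation with $K<0$ therefore keeps $K<0$ at every point and every time. The positivity $K=\tfrac{3}{16}\phi^{2}>0$ forced at $T\vline_{(A=0,\alpha=\beta)}$ thus directly contradicts the $c=2$ hypothesis, which is the promised obstruction and rules the case out. The one loose end to close is that $\phi$ cannot vanish here, lest $K=\tfrac{3}{16}\phi^{2}$ degenerate to zero; but $\phi>0$ is guaranteed on any marginally-trapped-surface foliation, exactly as the $\phi=0$ branch was discarded in the $c=0$ analysis.
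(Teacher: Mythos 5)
Your proposal is correct and follows essentially the same route as the paper: the Ricci-flow steps \eqref{theq10}--\eqref{theq13} are observed to be independent of \(c\), the condition \(\alpha=\beta\) yields \(K=-\tfrac{1}{2}\hat{\phi}\), and substituting the \(c=2\) form of \eqref{theq1} gives \(K=\tfrac{3}{16}\phi^{2}>0\) (using \(\phi>0\) on the marginally trapped foliation), which is exactly the paper's \eqref{theq500}. Your closing paragraph on the sign of \(K\) being frozen by \eqref{gau1}--\eqref{gau2} and the resulting contradiction goes beyond the theorem itself and corresponds to the paper's subsequent discussion and Corollary \ref{ddfq}, but it does not affect the validity of the proof of the stated result.
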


\begin{proof}
This is easy to show as we note that \eqref{theq10} to \eqref{theq13} holds here as well, and so the implication \(\alpha=\beta\implies K=-\left(1/2\right)\hat{\phi}\) consequently holds. For \(c=2\) in \eqref{theq1} this gives

\begin{eqnarray}\label{theq500}
K=\frac{3}{16}\phi^2,
\end{eqnarray}
which is always positive.\qed
\end{proof}
We see that the situation here gets a little more complicated. The horizon \(\Xi\) is compact if it satisfies \eqref{theq1} for \(c=2\), where \(K<0\) on the marginally trapped surfaces foliating \(\Xi\). By Theorem \ref{th4}, as the metric evolves \(K\) changes sign. For a smooth evolution we have \(K\rightarrow 0\rightarrow \mathbb{R}^{-}\), in which case the Ricci curvature scalar \(R\) on \(\Xi\) is negative. The change of sign of the Gaussian curvature occurs only if the energy density satisfies:

\begin{eqnarray}\label{theq501}
\rho=-\frac{19}{16}\phi^2-\Theta\phi,
\end{eqnarray}
which is always negative. 

A consequence of Theorem \ref{th4} is the following

\begin{corollary}\label{ddfq}
There cannot exist a compact dynamical horizon \(\Xi\) in LRS II spacetimes foliated by marginally trapped \(2\)-surfaces with \(K<0\), satisfying \eqref{theq1} for \(c=2\) where the induced metric \(h_{\mu\nu}\) on \(\Xi\) is time dependent.
\end{corollary}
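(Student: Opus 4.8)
The plan is to argue by contradiction, leveraging Theorem \ref{th4} together with the preservation of positive scalar curvature under the normalized Ricci flow. Suppose, for contradiction, that a compact dynamical horizon $\Xi$ of the stated type exists with time-dependent induced metric $h_{\mu\nu}$. Since $\Xi$ is compact with $c=2$ and $K<0$ on its foliation, the proof of Theorem \ref{th2} guarantees that the initial metric has scalar curvature $R=-2K>0$: indeed \eqref{theq1} for $c=2$ gives $\hat{\phi}+(3/4)\phi^2=2K$, and inserting this into \eqref{go79} yields $R=-2K$, which is positive precisely because $K<0$.

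Next I would invoke the result of Hamilton \cite{ham1} already used above to rule out the case $\phi=0$: for a compact $3$-manifold, if the scalar curvature is positive for the initial metric then it remains positive along the normalized Ricci flow for all times $t$. Hence $R>0$ throughout the entire evolution of $h_{\mu\nu}$.

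I would then apply Theorem \ref{th4}. Because $h_{\mu\nu}$ is time-dependent, the evolution reaches the instant $T\vline_{(A=0,\alpha=\beta)}$, at which \eqref{theq500} gives $K=(3/16)\phi^2>0$. Feeding this back through the identity $R=-2K$ (still valid, since \eqref{theq1} for $c=2$ holds on $\Xi$) produces $R=-(3/8)\phi^2<0$ at that time. This directly contradicts the conclusion of the preceding step that $R>0$ for all times, and the contradiction forces the nonexistence of such a horizon.

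The step I expect to be the main obstacle is justifying that the constraint \eqref{theq1} for $c=2$ (and hence the identity $R=-2K$) persists along the flow rather than holding only on the initial slice, so that the positivity of $R$ coming from Bonnet-Myers compactness and the positivity of $K$ coming from $\alpha=\beta$ genuinely collide at time $T$; one must also confirm that Hamilton's preservation result and Theorem \ref{th4} refer to the same normalized evolution, so the two sign statements about $R$ are truly incompatible.
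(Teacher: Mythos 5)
Your proposal is correct and takes essentially the same route as the paper: the paper's argument for Corollary \ref{ddfq} likewise combines Theorem \ref{th4} (the sign flip $K=\tfrac{3}{16}\phi^2>0$ at $t=T|_{(A=0,\alpha=\beta)}$ from \eqref{theq500}) with Hamilton's preservation of $R>0$ along the flow and the identity $R=-2K$ coming from \eqref{theq1} with $c=2$, reaching the identical contradiction. The paper supplements this with the observation that the hyperbolic marginally trapped surfaces (${}^2R=2K<0$) cannot foliate the spherical geometry forced by $R>0$, which also disposes of the time-independent case, but that remark is ancillary to the corollary as stated.
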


Corollary \ref{ddfq} is a consequence of a combination of results: As have been discussed in the introduction, Hamilton proved that any smooth closed \(3\)-manifold admitting a metric with \(R>0\) is of spherical geometry. With respect to the induced metric, \(R=-2K>0\) for \(c=2\). Notice that on the marginally trapped surfaces \(^2R=2K<0\) with respect to the induced \(2\)-metric, and hence the marginally trapped surfaces are hyperbolic planes (by Hamilton's proof of the uniformization theorem). Of course hyperbolic planes cannot foliate a \(3\)-sphere. As was mentioned earlier, if \(R> 0\) for the initial metric, then \(R> 0\) throughout the flow. The change of signs of the Gaussian curvature on the marginally trapped \(2\)-surfaces foliating \(\Xi\) from negative to positive at time \(t= T|_{(A=0,\alpha=\beta)}\) implies that \(R\) is now negative (\(R=-2K<0\) for \(K>0\)) at time \(t= T|_{(A=0,\alpha=\beta)}\), which is not possible. 

Note that in Corollary \ref{ddfq} we specified that we are ruling out the existence of compact horizon types in LRS II spacetimes satisfying \eqref{theq1} for \(c=2\) where the induced metric \(h_{\mu\nu}\) on \(\Xi\) is time dependent. While one may speculate that it is therefore possible that there are these horizon types if there is no time dependence of the metric, the first part of the discussion in the previous paragraph has already ruled out this possibility.

The statement of Proposition \ref{th3} combined with that of Corollary \ref{ddfq} may allow us to then state the following: \textit{every compact dynamical horizon of class satisfying \eqref{theq1} in LRS II spacetime, if it exists, is of spherical geometry, i.e.  the geometry \(\mathbb{S}^3\), and their cross sections are topological \(2\)-spheres} (the Gaussian curvature \(K\) remains positive). 

The fact that the Gaussian curvature \(K\) not only determines the geometry of the marginally trapped surfaces, but also characterizes the geometry of the foliated horizons is a very interesting property of the horizon types considered in this work, and it is definitely not trivial.
 
In the next subsection we look to obtain a characterization of compact dynamical horizons satisfying the compactness condition \eqref{theq1} in invariant way.

\subsection{An invariant set characterizing compact dynamical horizons in LRS II spacetimes}

The new covariant way of studying black hole horizons, initiated by \cite{rit1} and extensively exploited by \cite{shef1,shef2}, has proved very useful in unveiling properties of horizons in a relatively straightforward way compared to other approaches. The vanishing of the outgoing null expansion scalar \(\Theta_k\) characterizes marginally trapped surfaces, and by extension the horizons foliated by marginally trapped surfaces. We present an analogue of such characterization of the compact horizon types considered in this work. The primary motive of this construction is to unify, in a formal manner, the relationship between geometry and thermodynamics of the compact dynamical horizon types considered here.

First notice that the condition \eqref{sube416} implies \eqref{sube415} since \(e_{(\mu}u_{\nu)}\neq0\). Interestingly, since \eqref{theq14} implies \(Q=0\), there is a nice geometric condition that gives the condition \(Q=0\): One defines an ``Einstein-like" symmetric \(\left(0,2\right)\)-tensor on \(\Xi\) as \cite{yano1}

\begin{eqnarray}\label{ddr1}
\mathcal{G}_{\mu\nu}=R_{\mu\nu}-\frac{1}{3}Rh_{\mu\nu},
\end{eqnarray}
which has played a crucial role in the study of conformal geometry of Riemannian manifolds \cite{yano1,yano2} (this tensor gives the deviation from Einstein space). In particular, the sign of the norm of this tensor and its contraction with certain complete vector fields have been used to test when a Riemannian manifold is conformorphic or isometric to spheres of varying dimensions. For the cases considered throughout this work, if we calculate the norm of this tensor we obtain

\begin{eqnarray}\label{ddr2}
\begin{split}
\mathbf{G}=\mathcal{G}_{\mu\nu}\mathcal{G}^{\mu\nu}&=R_{\mu\nu}R^{\mu\nu}-\frac{1}{3}R^2\\
&=\frac{2}{3}\left(\alpha-\beta\right)^2.
\end{split}
\end{eqnarray} 
Clearly \(\mathbf{G}=0\) gives \eqref{sube416}, which consequently gives the vanishing of the heat flux, i.e. \(Q=0\). We also have the scalar \(\tilde{\mathbf{G}}=k^{\nu}h^{\mu\delta}\nabla_{\delta}\mathcal{G}_{\mu\nu}\), whose integral plays a crucial role in the study of conformal transformations on Riemannian manifolds of arbitrary dimensions (see the references \cite{yano1,yano2}). Explicitly we calculate

\begin{eqnarray*}
\begin{split}
\tilde{\mathbf{G}}&=\frac{1}{\sqrt{2}}\left(\frac{2}{3}\Theta-\Sigma+\phi\right)K
&=\Theta_kK.
\end{split}
\end{eqnarray*}
Note also that if \(K\neq 0\) (which has been assumed throughout this work), then we must have \(\tilde{\mathbf{G}}=0\implies \Theta_k=0\). Now, define the invariant set \(\mathcal{I}=\lbrace{(\mathbf{G},\tilde{\mathbf{G}})\rbrace}_i\ \text{with}\ (\mathbf{G},\tilde{\mathbf{G}}): \bar{M}\times\bar{M}\to\mathbb{R}^2\), and indexed by the triple \((\Theta,\Sigma,\phi)\) where \(\bar{M}\) is a \(3\)-manifold (Riemannian) in the LRS II class. Then the subset \(\mathcal{J}\subseteq \mathcal{I}\) defined by the constant map \((\mathbf{G},\tilde{\mathbf{G}})\mapsto \left(0,0\right)\), i.e. \(\mathcal{J}=\lbrace{(\mathbf{G},\tilde{\mathbf{G}})\in \mathcal{I}\ \ |\ \  (\mathbf{G},\tilde{\mathbf{G}})\mapsto \left(0,0\right)\in\mathbb{R}^2\rbrace}\) provides a collection of marginally trapped tubes which includes the class of compact dynamical horizons in LRS II spacetimes considered throughout this work. If we assume that the compactness condition \eqref{theq1} is satisfied, then, indeed, the set \(\mathcal{J}\) characterizes the class of compact horizons considered here.

Notice how the scalar \(\tilde{\mathbf{G}}\) not only identifies the horizon for non-zero \(K\), but also that it is sufficient to determine the topology and geometry of the black hole itself: if there is indeed a trapped surface, then \(\Theta_k<0\) by definition. Therefore \(\tilde{\mathbf{G}}<0\) would imply that \(K>0\), in which case the trapped \(2\)-surfaces are spherical, which is as one would expect.

\section{Discussion}\label{soc5}

Initiated purely out of mathematical curiousity, this work set out to investigate the geometry of a certain class of compact dynamical horizons with a time-dependent induced metric in LRS II spacetimes. The geometry of Riemannian manifolds and transformations to metrics on them (conformal, homothetic or isometric) is a well grounded area of study in differential and Riemannian geometry. As mentioned in the introduction, the study of marginally trapped tubes and their evolution, using the \(1+1+2\) semitetrad covariant formalism, has been successfully carried out in recent works \cite{rit1,shef1,shef2}, yielding established results as well as providing clear insights into the nature of the matter and thermodynamic variables on the marginally trapped tubes. Here, we have derived a class of compact horizons and have established geometrical results on these horizons, employing a range of well established results for \(n\)-dimensional compact Riemannian manifolds.

The compactness condition is established - using the Bonnet-Myers theorem - as the requirement that the sheet expansion, \(\phi\), satisfies the spatial first order differential equation \eqref{theq1}, parametrized by a real constant \(c\) which takes on the value `\(0\)' and `\(2\)' in which case the Ricci curvature on the horizon takes the simple form \(R=\pm 2K\), with \(K\) being the Gaussian curvature of \(2\)-surfaces in the spacetime (the `\(+\)' is for `\(c=0\)' and the `\(-\)' for `\(c=2\)'). For the \(c=0\) case it is seen that \(R=2K={}^2R\). 

Let \(\Xi\) be a compact DH of type considered here. Using the Ricci flow evolution equation for compact \(3\)-manifolds it is shown, for the case \(c=0\) (\(K>0\)), that there exists a metric on \(\Xi\) at time \(t= T_{\alpha=\beta}\) for which \(\Xi\) does not radiate, i.e. \(Q=0\). These solutions to the Ricci evolution equations on grounds that, either \(\Xi\) is minimal (this was ruled out since \(\phi=0\) would imply that \(K=0\implies R=0\); this is not possible on a smooth compact Riemannian manifold with positive Ricci curvature) or the shear scalar \(\Sigma\) is complex valued (which can be ruled out), or the strong energy condition (SEC) has to be violated in which case it is  not difficult to show that the weak energy condition is either violated or otherwise the isotropic pressure is negative. Since \(K>0\) at all times of the metric evolution (\(R=2K>0\) at all times), this fact is used to justify the \(\mathbb{S}^3\) geometry of the compact dynamical horizon.

LRS II spacetimes admit no compact dynamical horizons of the type considered here for the case \(c=2\) (\(K<0\)) since if we assume the contrary, then it is shown that \(\Xi\) admits a metric for which \(K>0\) on the marginally trapped surfaces foliating \(\Xi\). This therefore further restricts the number of compact dynamical horizon that may be specified by the compactness condition \eqref{theq1}.

marginally trapped tubes are characterized by the vanishing of the null expansion scalars on surfaces foliating the marginally trapped tubes. Specific conditions are to be satisfied for an marginally trapped tube to be a dynamical horizon, specifically that the marginally trapped tube be spacelike at all points. In some cases additional energy conditions are required to be satisfied. In this work we have provided such a characterization for the compact cases considered. We have used the norm of the Einstein-like tensor \eqref{ddr1} and a certain contraction with the induced metric and the outgoing null normal vector to construct a set consisting of pairs of coordinate independent functions which vanish for the compact dynamical horizons considered here, providing us an invariant way to identify these compact dynamical horizons in LRS II spacetimes.

This work may be seen as initiating an approach to study the geometry and thermodynamics of compact marginally trapped tubes. We intend to, in a follow up work, consider compact marginally trapped tubes (and not just dynamical horizon) in more general spacetime settings, as well as their behavior under conformal rescaling of the metric induced from the ambient spacetime.

\section*{Acknowledgements}

The authors thank the anonymous referee for useful suggestions which helped improve the readibility of the paper. AS acknowledges support from the First Rand Bank, South Africa, through the Department of Mathematics and Applied Mathematics, University of Cape Town. PKSD acknowledges support from the First Rand Bank, South Africa. AS would also like to extend thanks to Dr. Gareth Amery of the School of Mathematics, Statistics and Computer Science, University of KwaZulu-Natal, South Africa, for reading through the first draft of the paper and providing useful comments.

\end{document}